\newtheorem{lemma}{Lemma}
\newtheorem{claim}{Claim}
\theoremstyle{definition}
\newtheorem{definition}{Definition}
\newcommand{\OPT}{\mathrm{OPT}}
\newcommand{\ALG}{\mathrm{ALG}}
\newcommand{\REV}{\mathrm{rev}}
\newcommand{\Ex}[2][]{\mbox{\rm\bf E}_{#1}\left[#2\right]}
\renewcommand{\Pr}[2][]{\mbox{\rm\bf Pr}_{#1}\left[#2\right]}
\title{Simplified Prophet Inequalities for Combinatorial Auctions}
\author{Alexander Braun \footnote{ alexander.braun@uni-bonn.de \newline \hspace*{0.58cm}The author has been funded by the Deutsche Forschungsgemeinschaft (DFG, German Research Foundation), \hspace*{0.58cm}Project No. 437739576.} \qquad Thomas Kesselheim \footnote{ thomas.kesselheim@uni-bonn.de} \\{\small Institute of Computer Science, University of Bonn, Germany}}
\begin{document}
	\maketitle
	\begin{abstract}
		We consider prophet inequalities for XOS and MPH-$k$ combinatorial auctions and give a simplified proof for the existence of static and anonymous item prices which recover the state-of-the-art competitive ratios. 

Our proofs make use of a linear programming formulation which has a non-negative objective value if there are prices which admit a given competitive ratio $\alpha \geq 1$. Changing our perspective to dual space by an application of strong LP duality, we use an interpretation of the dual variables as probabilities to directly obtain our result. In contrast to previous work, our proofs do not require to argue about specific values of buyers for bundles, but only about the presence or absence of items.

As a side remark, for any $k \geq 2$, this simplification also leads to a tiny improvement in the best competitive ratio for MPH-$k$ combinatorial auctions from $4k-2$ to $2k + 2 \sqrt{k(k-1)} -1$. 
	\end{abstract}
	\vspace*{0.8cm}
	\section{Introduction}
\label{section:introduction}

Prophet inequalities are an important tool to understand posted-pricing mechanisms for combinatorial auctions. Originally introduced as stopping time theorems in the 1970s and 80s by \citet{krengel1977semiamarts, krengel1978semiamarts} and \citet{Samuel84}, they regained attention a decade ago by the work of \citet{DBLP:conf/aaai/HajiaghayiKS07} and \citet{DBLP:conf/stoc/ChawlaHMS10} due to their interpretation in the context of algorithmic mechanism design. In particular, the competitive ratio of a pricing-based prophet inequality directly implies a DSIC and IR posted-prices mechanism with the same approximation guarantee. In the past years, our understanding of pricing-based prophet inequalities rose constantly and significantly by a line of inspiring work, see e.g. \citet{DBLP:conf/soda/FeldmanGL15, DBLP:conf/focs/DuettingFKL17,DBLP:conf/ec/GravinW19, DBLP:conf/focs/DuttingKL20} and \citet{DBLP:conf/ipco/CorreaCFPW22} among others. 

For XOS and MPH-$k$ combinatorial auctions \citep{DBLP:conf/soda/FeldmanGL15, DBLP:conf/focs/DuettingFKL17}, the common approach so far was to state a vector of static and anonymous item prices. Agents arrive one-by-one and can choose their most desired bundle among the remaining items. In order to derive a desirable competitive ratio with respect to the expected offline optimum, the algorithm designer is required to define prices carefully and tailored with respect to the class of valuation functions under consideration. In \citet{DBLP:conf/soda/FeldmanGL15}, item prices for XOS combinatorial auctions were set to be half of the expected contribution of an item to the social optimum. Later, \citet{DBLP:conf/focs/DuettingFKL17} discovered that it is sufficient to argue in the full information setting in contrast to dealing with random valuation profiles. 
Still, also their arguments are involved and require a deep understanding of the underlying habits. In contrast, easily accessible ideas concerning pricing-based prophet inequalities in combinatorial auctions are rare in the literature. Therefore, we try to advance our understanding of the following question: \\

{\emph{\centering What is the simplest way to prove prophet inequalities for combinatorial auctions?}} \\

As our contribution, we derive simplified proofs for prophet inequalities in XOS and MPH-$k$ combinatorial auctions which work as follows: First, we make use of a linear program with variables $p_j$ for any item $j$ corresponding to the static and anonymous item prices. For a variable assignment $(p_j)_{j \in M}$, the objective function of the LP will be non-negative if the posted-pricing mechanism with price vector $(p_j)_{j \in M}$ achieves a competitive ratio of $\alpha \geq 1$ (Section~\ref{subsection:lp_formulation}). To show that there exists a solution to the LP whose corresponding prices achieve the competitive ratio, we want to argue that there always is a feasible primal solution with non-negative objective value. To prove this, we use strong LP duality and change our perspective into dual space. We interpret dual variables and constraints (Section~\ref{subsection:understanding_dual}) and show that any dual feasible solution always has a non-negative objective value (Section~\ref{section:proof_competitive_ratios}). In particular, interpreting dual variables as probabilities over subsets of items, the dual constraints give a bound on the probability with which an item can be absent. Having this interpretation, we can argue by a reformulation of the dual objective that it will be non-negative for any dual feasible solution. As a consequence, at least the optimal primal solution also has to have a non-negative objective value. This implies that the corresponding prices $(p_j)_{j \in M}$ lead to a posted-pricing mechanism with competitive ratio $\alpha$.

Comparing this to previous work, we can also interpret the static and anonymous item prices from \citet{DBLP:conf/soda/FeldmanGL15} and \citet{DBLP:conf/focs/DuettingFKL17} in our context. Their prices are actually solutions to the linear program which lead to a non-negative objective value. Still, as mentioned already, proofs for competitive ratios in the work of \citet{DBLP:conf/soda/FeldmanGL15} and \citet{DBLP:conf/focs/DuettingFKL17} are based on arguments about a specific choice of $(p_j)_{j \in M}$. In particular, they argue about prices and valuation functions and show that, on the one hand, prices are sufficiently high to cover the welfare loss induced by allocating an item. On the other hand, they need to argue that prices are low enough such that agents are willing to buy. In contrast, our approach can avoid any argument on specific buyers' valuations at all. 

Concerning competitive ratios, our LP based proof recovers the tight guarantees of $2$ for a single item as well as XOS valuation functions \citep{DBLP:conf/soda/FeldmanGL15} and the currently best known $4k-2$ for MPH-$k$ valuations \citep{DBLP:conf/focs/DuettingFKL17}. As a side remark, for any $k \geq 2$, we also get a tiny improvement in the competitive ratio for MPH-$k$ prophet inequalities from $4k-2$ to $2k + 2 \sqrt{k(k-1)} -1$ which is slightly smaller than $4k-2$. 

	\section{Notation and Preliminaries}
\label{Section:Preliminaries}

We consider the following setting: There is a set of $m$ heterogeneous items $M$ and a sequence of $n$ agents arriving one-by-one. As agent $i$ arrives, we get to know her valuation function $v_i : 2^M \rightarrow \mathbb{R}_{\geq 0}$ and agent $i$ buys the bundle of (currently unassigned) items which maximizes her quasi-linear utility. Valuation functions are always non-negative and bounded for any bundle as well as monotone and normalized, i.e. $v_i(S) \leq v_i(S')$ for $S \subseteq S' \subseteq M$ and $v_i(\emptyset) = 0$. We assume that each $v_i \sim \mathcal{D}_i$ is drawn independently from a publicly known, not necessarily identical distribution $\mathcal{D}_i$ and denote by $\mathbf{v} \sim \times_{i=1}^{n} \mathcal{D}_i$ the valuation profile of all agents.

An \textit{allocation} $\mathbf{X} = (X_i)_{i \in [n]}$ is a vector of item bundles such that agent $i$ is allocated bundle $X_i$ and for two agents $i \neq i'$, we have $X_i \cap X_{i'} = \emptyset$. The \textit{social welfare} of an allocation $\mathbf{X}$ given valuation profile $\mathbf{v}$ is defined as $\mathbf{v}(\mathbf{X}) \coloneqq \sum_{i \in [n]} v_i(X_i)$. An algorithm outputting allocation $\mathbf{X}$ is \emph{$\alpha$-competitive} for some $\alpha \geq 1$ with respect to the expected offline optimal social welfare if $\Ex[\mathbf{v}]{\mathbf{v}(\mathbf{X}) } \geq \frac{1}{\alpha} \cdot \Ex[\mathbf{v}]{\max_{\mathbf{X}^\ast} \mathbf{v}(\mathbf{X}^\ast)}$.

\paragraph{Valuation Functions.} We consider XOS valuation functions which are defined as follows: A set function $a : 2^M \rightarrow \mathbb{R}_{\geq 0}$ is \textit{additive} if and only if there are numbers $c_1,\dots,c_{m} \in \mathbb{R}_{\geq 0}$ such that for any $S \subseteq M$ we have $a(S) = \sum_{j \in S} c_j$. A set function $v: 2^M \rightarrow \mathbb{R}_{\geq 0}$ is \textit{fractionally subadditive} (also called \textit{XOS}) if and only if there are additive set functions $a_1,\dots,a_t$ such that for every $S \subseteq M$ we have $v(S) = \max_{i \leq t } a_i(S)$. 

A natural extension of XOS to functions which admit complementaries are MPH-$k$ valuations \citep{DBLP:conf/aaai/FeigeFIILS15}. Consider a valuation function $v:2^M \rightarrow \mathbb{R}_{\geq 0}$. A hypergraph representation of the function $v$ is a set function which satisfies $v(S) = \sum_{X \subseteq S} w(X)$. We call a set $X$ with $w(X) \neq 0$ a hyperedge of $w$, and a positive-hyperedge if $w(X) > 0$. The \emph{rank} of a hypergraph representation $w$ is the cardinality of the largest hyperedge in $w$. If the hypergraph representation of $v$ only contains non-negative hyperedges, we call this a positive-hyperedge-$k$ function (PH-$k$) as introduced by \citet{10.1145/2229012.2229016}. The definition of the MPH-$k$ hierarchy now represents a valuation function $v$ as the maximum over a set of PH-$k$ functions.

\begin{definition} (Maximum-over-Positive-Hypergraph-$k$ class \citep{DBLP:conf/aaai/FeigeFIILS15}) \\
	A monotone valuation function $v : 2^M \rightarrow \mathbb{R}_{\geq 0}$ is MPH-$k$ if there is a set $\{v_\ell\}_{\ell \in \mathcal{L}}$ of PH-$k$ functions such that 
	$ v(S) = \max_{\ell \in \mathcal{L}} v_\ell(S)$ for any $ S \subseteq M$ and arbitrary index set $\mathcal{L}$.
\end{definition}
	\section{General Framework}
\label{section:framework_lp}

In order to derive proofs for our competitive ratios, we start with a lower bound on the social welfare of our algorithm. By $\OPT_i(\mathbf{v})$ we denote the (possibly empty) bundle of items which agent $i$ gets in the optimal allocation on valuation profile $\mathbf{v}$.

\begin{lemma}\label{lemma:framework_lp}
	For any combinatorial auction with monotone valuation functions, the social welfare of the sequential posted-prices mechanism with price vector $\mathbf{p} = (p_j)_{j \in M}$ fulfils for any $\beta \in [0,1]$ \[ \Ex[\mathbf{v}]{\mathbf{v}\left( \ALG(\mathbf{v}) \right)} \geq \min_{T \subseteq M} \left( \sum_{j \in T} p_j + \sum_{i=1}^n  \Ex[\mathbf{v}]{ \sum_{S \subseteq M} \beta \left( v_i(S\setminus T) - \sum_{j \in S \setminus T} p_j \right) \mathds{1}_{S = \OPT_i(\mathbf{v})} } \right) \enspace. \] 
\end{lemma}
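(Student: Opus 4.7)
The approach I would take is pointwise welfare decomposition followed by selecting a minimising witness. For a fixed realization $\mathbf{v}$, let $A(\mathbf{v}) \subseteq M$ denote the (random) set of items allocated by the mechanism and $A_{<i}(\mathbf{v}) \subseteq A(\mathbf{v})$ the subset already sold when agent $i$ arrives. Quasi-linearity gives
\[
\mathbf{v}(\ALG(\mathbf{v})) = \sum_{j \in A(\mathbf{v})} p_j + \sum_{i=1}^n u_i(\mathbf{v}),
\]
where $u_i(\mathbf{v}) \geq 0$ is agent $i$'s realised utility from the bundle she chooses.

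The key step is to lower-bound each $u_i$. Since $A_{<i}(\mathbf{v}) \subseteq A(\mathbf{v})$, we have $M \setminus A(\mathbf{v}) \subseteq M \setminus A_{<i}(\mathbf{v})$, so the bundle $\OPT_i(\mathbf{v}) \setminus A(\mathbf{v})$ is available to agent $i$ on arrival. Because she picks the utility-maximising bundle,
\[
u_i(\mathbf{v}) \geq v_i\bigl(\OPT_i(\mathbf{v}) \setminus A(\mathbf{v})\bigr) - \sum_{j \in \OPT_i(\mathbf{v}) \setminus A(\mathbf{v})} p_j.
\]
Combining with $u_i(\mathbf{v}) \geq 0$ and $\beta \in [0,1]$ gives $u_i(\mathbf{v}) \geq \beta\bigl[v_i(\OPT_i \setminus A) - \sum_{j \in \OPT_i \setminus A} p_j\bigr]$ regardless of the sign of the bracketed term: if nonnegative, $u_i$ already dominates it and hence its $\beta$-multiple; if negative, $u_i \geq 0 \geq \beta \cdot (\textrm{negative})$. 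Substituting back produces the pointwise inequality
\[
\mathbf{v}(\ALG(\mathbf{v})) \geq \sum_{j \in A(\mathbf{v})} p_j + \beta \sum_{i=1}^n \Bigl[v_i(\OPT_i \setminus A(\mathbf{v})) - \sum_{j \in \OPT_i \setminus A(\mathbf{v})} p_j\Bigr],
\]
which is precisely the bracketed expression of the lemma evaluated at the admissible subset $T = A(\mathbf{v})$; hence it is at least the minimum of that expression over all $T \subseteq M$.

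Taking the expectation over $\mathbf{v}$ on both sides and using the identity $v_i(\OPT_i \setminus T) = \sum_{S \subseteq M} v_i(S \setminus T)\mathds{1}\{S = \OPT_i(\mathbf{v})\}$ (and the analogous identity for the price sum) recasts the inner term in the explicit indicator form of the statement.

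The main obstacle I expect is the interchange of the minimum over $T$ and the expectation over $\mathbf{v}$: since $\min$ is concave, Jensen's inequality gives $\Ex[\min_T f(T,\mathbf{v})] \leq \min_T \Ex[f(T,\mathbf{v})]$ in general, so one must verify that the pointwise bound (obtained using the random witness $T = A(\mathbf{v})$) lifts cleanly to a minimum over deterministic $T$ sitting outside the expectation. I would handle this by reading the pointwise inequality as $\mathbf{v}(\ALG(\mathbf{v})) \geq \min_{T \subseteq M} f(T,\mathbf{v})$ holding for every $\mathbf{v}$ with $A(\mathbf{v})$ as the admissible witness, and then observing that for each fixed deterministic $T$ the right-hand side is still a lower bound, so taking expectation first and then the outer minimum yields the stated form.
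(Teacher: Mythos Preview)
Your decomposition into revenue plus utility and the pointwise bound
\[
\mathbf{v}(\ALG(\mathbf{v})) \;\geq\; f\bigl(A(\mathbf{v}),\mathbf{v}\bigr),
\qquad
f(T,\mathbf{v}) \coloneqq \sum_{j\in T} p_j + \beta\sum_{i}\Bigl[v_i(\OPT_i(\mathbf{v})\setminus T) - \sum_{j\in \OPT_i(\mathbf{v})\setminus T} p_j\Bigr],
\]
are correct. The gap is exactly the one you flag yourself and then dismiss: from $\mathbf{v}(\ALG(\mathbf{v}))\geq \min_T f(T,\mathbf{v})$ you only obtain $\Ex[\mathbf{v}]{\mathbf{v}(\ALG(\mathbf{v}))}\geq \Ex[\mathbf{v}]{\min_T f(T,\mathbf{v})}$, and since $\Ex{\min}\leq \min\Ex{}$ this is weaker than the lemma. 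Your proposed fix, ``for each fixed deterministic $T$ the right-hand side is still a lower bound'', is simply false: take a single item, a single buyer with value $0$, and price $p=100$; then $\mathbf{v}(\ALG)=0$ while $f(\{\text{item}\},\mathbf{v})=100$. The inequality $\mathbf{v}(\ALG(\mathbf{v}))\geq f(T,\mathbf{v})$ holds only for the random witness $T=A(\mathbf{v})$, and because $A(\mathbf{v})$ is correlated with the $\mathbf{v}$ inside $\OPT_i(\mathbf{v})$, you cannot conclude $\Ex{f(A(\mathbf{v}),\mathbf{v})}\geq \min_T \Ex{f(T,\mathbf{v})}$.

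The paper resolves this by an independent-copy trick that decouples the sold set from the optimum. One draws a fresh profile $\mathbf{v}'$ and lower-bounds buyer $i$'s utility via the feasible bundle $\OPT_i\bigl((v_i,\mathbf{v}'_{-i})\bigr)\setminus T\bigl((v'_i,\mathbf{v}_{-i})\bigr)$; after renaming, the sold set depends on $\mathbf{v}'$ while $\OPT_i$ depends on $\mathbf{v}$. Then
\[
\Ex[\mathbf{v}]{\mathbf{v}(\ALG(\mathbf{v}))}\;\geq\;\Ex[\mathbf{v}']{\,\Ex[\mathbf{v}]{f(T(\mathbf{v}'),\mathbf{v})}\,}\;\geq\;\min_{T}\Ex[\mathbf{v}]{f(T,\mathbf{v})},
\]
because for each fixed $\mathbf{v}'$ the inner expectation is exactly $\Ex[\mathbf{v}]{f(T,\mathbf{v})}$ at the deterministic point $T=T(\mathbf{v}')$. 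This decoupling is the missing idea in your argument.
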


The proof of this lemma follows standard steps in the prophet inequality literature \citep{DBLP:conf/soda/FeldmanGL15, DBLP:conf/focs/DuettingFKL17,DBLP:conf/ec/GravinW19, DBLP:conf/focs/DuttingKL20, DBLP:conf/ipco/CorreaCFPW22}. In particular, we split the social welfare into revenue and utility and bound each quantity separately. As a final step, we lower bound the social welfare by allowing the set of allocated items $T$ to be chosen by an adversary. For the sake of completeness, we provide a full proof of Lemma~\ref{lemma:framework_lp} in Appendix~\ref{section:appendix} which the experienced reader may skip. 

\subsection{An LP Formulation and its Dual}
\label{subsection:lp_formulation}

Having the lower bound on the social welfare obtained by the algorithm, we actually want to show that for any set $T \subseteq M$, the lower bound in Lemma~\ref{lemma:framework_lp} is at least as large as a $\frac{1}{\alpha}$-fraction of the expected offline optimum. Interpreting this as a constraint for any set $T \subseteq M$, we can formulate an LP which has a non-negative objective value whenever the desired competitive ratio $\alpha$ can be achieved. Thus, in order to prove a competitive ratio, we only need to argue about the LP. 

The LP has variables $p_j$ for any item $j \in M$ where $p_j$ corresponds to the static and anonymous item price for item $j$. In addition, there are slack variables $\ell_{+}$ and $\ell_{-}$ which indicate if the desired competitive ratio can be achieved or not.

\begin{align*}
	\text{max}\quad&\ell_{+}-\ell_{-}\\
	\text{s.t.}\quad& \sum_{j \in T} p_j + \sum_{i=1}^n \Ex[\mathbf{v}]{\sum_{S \subseteq M} \beta \left( v_i(S\setminus T) - \sum_{j \in S \setminus T} p_j \right) \mathds{1}_{S = \OPT_i(\mathbf{v})} } \\
	&\hspace*{50pt}\geq \frac{1}{\alpha} \Ex[\mathbf{v}]{\mathbf{v}(\OPT(\mathbf{v}))} +\ell_{+} - \ell_{-} &&\text{for all $T \subseteq M$}\\
	&p_j \geq 0 &&\text{for all $j \in M$}\\
	&\ell_{+}, \ell_- \geq 0 . 
\end{align*}

We note that \citet{DBLP:conf/focs/DuttingKL20} use a similar variant of this LP and its dual in order to show the existence of prices for subadditive combinatorial auctions\footnote{In contrast to our LP, they draw the bundle $S$ from some probability distribution whereas we set it equal to the bundle of items which agent $i$ gets in the offline optimum on $\mathbf{v}$. In addition, they do subtract the prices for all items in $S$ whereas in our formulation, it is essential to only consider prices of items in $S \setminus T$. Subtracting prices for any item in $S$ will result in a worse competitive ratio already in the case of a single item.}. Concerning the constraints for any set $T \subseteq M$, observe that they can be rearranged to 

\begin{align*}
	&\sum_{j \in M} p_j \left( \beta \sum_{i=1}^{n} \Pr[\mathbf{v}]{j \in \OPT_i(\mathbf{v})} \mathds{1}_{j \notin T} - \mathds{1}_{j \in T} \right) + \ell_+ - \ell_- \\ &\hspace*{50pt} 
	\leq \beta \sum_{i=1}^n \sum_{S \subseteq M} \Ex[\mathbf{v}]{  v_i(S\setminus T) \mathds{1}_{S = \OPT_i(\mathbf{v})} } -  \frac{1}{\alpha} \Ex[\mathbf{v}]{\mathbf{v}(\OPT(\mathbf{v}))} \enspace.
\end{align*}

In order to argue that the LP has a non-negative objective value, we can consider the dual program and use strong duality. In particular, we will argue that any feasible dual solution has an objective value which is non-negative. Via strong duality, this directly implies that at least the optimal primal solution has a non-negative objective value and hence, the corresponding prices lead to the desired competitive ratio. 

The dual of the LP introduced above has variables $\mu_T \geq 0$ for every set $T \subseteq M$ and is given by 

\begin{align*}
	\text{min}\quad&\sum_{T} \mu_T \left( \sum_{i=1}^{n} \beta \Ex[\mathbf{v}]{ \sum_{S \subseteq M} v_i(S \setminus T) \mathds{1}_{S = \OPT_i(\mathbf{v})} } - \frac{1}{\alpha}\Ex[\mathbf{v}]{\mathbf{v}(\OPT(\mathbf{v}))}\right)\\
	\text{s.t.}\quad&\sum_{T} \mu_T \left( \beta \sum_{i=1}^{n} \Pr[\mathbf{v}]{j \in \OPT_i(\mathbf{v})} \mathds{1}_{j \notin T} - \mathds{1}_{j \in T}  \right)   \geq 0 &&\text{for all $j \in M$}\\
	&\sum_{T} \mu_T = 1\\
	&\mu_T \geq 0 &&\text{for all $T \subseteq M$}.
\end{align*}

Having this, we are able to state the lemma which will simplify the proof of threshold-based prophet inequalities.

\begin{lemma}\label{lemma:lp_duality}
	For any combinatorial auction with monotone valuation functions, there exists a sequential posted-prices mechanism with price vector $\mathbf{p} = (p_j)_{j \in M}$ which is $\alpha$-competitive with respect to the expected offline optimum if the objective value of the dual program is non-negative for any feasible dual solution.
\end{lemma}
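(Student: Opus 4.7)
The plan is to invoke strong LP duality to convert the dual hypothesis into a statement about an optimal primal solution, and then feed the resulting prices into Lemma~\ref{lemma:framework_lp}. If every feasible dual solution has objective value $\geq 0$, then so does the dual optimum; by strong duality the primal optimum satisfies $\ell_{+}^{\ast} - \ell_{-}^{\ast} \geq 0$; and the corresponding prices $\mathbf{p}^{\ast}$ are then guaranteed, for every $T \subseteq M$, to satisfy the primal constraint with right-hand side at least $\frac{1}{\alpha}\Ex[\mathbf{v}]{\mathbf{v}(\OPT(\mathbf{v}))}$. Minimising the left-hand side over $T$ and substituting into Lemma~\ref{lemma:framework_lp} then yields the claim.

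To make sure strong duality is legitimately applicable, I would begin by exhibiting feasible solutions on both sides, since these are finite-dimensional linear programs and the existence of feasible primal and dual points already forces both optima to be finite, attained and equal. For the primal, setting $p_j = 0$ for all $j \in M$, $\ell_{+} = 0$ and $\ell_{-} = \frac{1}{\alpha}\Ex[\mathbf{v}]{\mathbf{v}(\OPT(\mathbf{v}))}$ reduces each constraint to $\beta \sum_{i=1}^{n} \Ex[\mathbf{v}]{v_i(\OPT_i(\mathbf{v}) \setminus T)} \geq 0$, which holds by non-negativity of the valuations. For the dual, the point $\mu_{\emptyset} = 1$ and $\mu_T = 0$ otherwise satisfies $\sum_{T} \mu_T = 1$ and reduces the item constraints to $\beta \sum_{i=1}^{n} \Pr[\mathbf{v}]{j \in \OPT_i(\mathbf{v})} \geq 0$, which is trivial.

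The remainder of the proof is bookkeeping. Letting $(\mathbf{p}^{\ast}, \ell_{+}^{\ast}, \ell_{-}^{\ast})$ be an optimal primal solution, strong duality together with the hypothesis gives $\ell_{+}^{\ast} - \ell_{-}^{\ast} \geq 0$, so for every $T \subseteq M$
\[
\sum_{j \in T} p_j^{\ast} + \sum_{i=1}^{n} \Ex[\mathbf{v}]{\sum_{S \subseteq M} \beta \left( v_i(S \setminus T) - \sum_{j \in S \setminus T} p_j^{\ast} \right) \mathds{1}_{S = \OPT_i(\mathbf{v})}} \geq \frac{1}{\alpha}\Ex[\mathbf{v}]{\mathbf{v}(\OPT(\mathbf{v}))}.
\]
Taking the minimum over $T$ on the left-hand side and plugging the resulting bound into Lemma~\ref{lemma:framework_lp} gives $\Ex[\mathbf{v}]{\mathbf{v}(\ALG(\mathbf{v}))} \geq \frac{1}{\alpha}\Ex[\mathbf{v}]{\mathbf{v}(\OPT(\mathbf{v}))}$, which is the desired competitive ratio. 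The only step where I expect to be a little careful is the clean invocation of strong duality, which is why I would justify feasibility of both programs up front; beyond that, the argument is a direct unpacking of the primal constraint.
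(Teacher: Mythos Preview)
Your proposal is correct and follows exactly the approach the paper sketches in Section~\ref{subsection:lp_formulation}: use strong LP duality to transfer the non-negativity of every dual objective to the primal optimum, then feed the resulting prices into Lemma~\ref{lemma:framework_lp}. Your explicit feasibility check for both programs is a detail the paper glosses over but which fits the same argument.
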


Observe that by the construction above, showing the existence of suitable prices boils down to arguing about the dual of a linear program. 

\subsection{Understanding the Dual Program} 
\label{subsection:understanding_dual}

Before we continue to derive our competitive ratios, we will start by gaining a better understanding of the dual. 

\paragraph{Dual Constraints.} First, note that $\sum_{T} \mu_T = 1$ and $\mu_T \geq 0$ for any $T$. Hence, we can also interpret the vector $(\mu_T)_T$ as a probability distribution over subsets $T\subseteq M$. 

Second, we can without loss of generality assume that $\sum_{i=1}^n \Pr[\mathbf{v}]{j \in \OPT_i(\mathbf{v})} = 1$ as the optimum will always allocate any item in any realization $\mathbf{v}$. As a consequence, we can reformulate the dual constraints \[ \sum_{T} \mu_T \left( \beta \sum_{i=1}^{n} \Pr[\mathbf{v}]{j \in \OPT_i(\mathbf{v})} \mathds{1}_{j \notin T} - \mathds{1}_{j \in T}  \right)   \geq 0 \] for any item $j \in M$ as follows: 

\begin{align*}
	& \sum_{T} \mu_T \left( \beta \sum_{i=1}^{n} \Pr[\mathbf{v}]{j \in \OPT_i(\mathbf{v})} \mathds{1}_{j \notin T} - \mathds{1}_{j \in T}  \right) \\
	& \hspace*{0.8cm} = \sum_{T} \mu_T \left( \beta  \mathds{1}_{j \notin T} - \mathds{1}_{j \in T}  \right) \\ 
	& \hspace*{0.8cm} = \beta \sum_{T} \mu_T \mathds{1}_{j \notin T} - \sum_T \mu_T \mathds{1}_{j \in T}  \\
	& \hspace*{0.8cm} = \beta \Pr[\mu]{j \notin T} - \Pr[\mu]{j \in T} \geq 0 \enspace,
\end{align*}
where we denote by $\Pr[\mu]{j \notin T}$ the probability that some item $j$ is not contained in a set $T \subseteq M$ sampled with respect to distribution $(\mu_T)_T$. In other words, $ \Pr[\mu]{j \notin T} \coloneqq \Pr[T \sim \mu]{j \notin T} \coloneqq  \sum_{T} \mu_T \mathds{1}_{j \notin T} $. 

Using that $\Pr[\mu]{j \notin T} + \Pr[\mu]{j \in T} = 1$, the dual constraints for any item $j \in M$ are equivalent to our \emph{first key property}:
\begin{align}\label{equation:dual_probabilites}
	\Pr[\mu]{j \notin T} \geq \frac{1}{1 + \beta} \hspace*{0.8cm } \text{ and } \hspace*{0.8cm } \Pr[\mu]{j \in T} \leq \frac{\beta}{1 + \beta} \enspace.
\end{align} 
We will later set $\beta =1 $ for a single item and XOS functions. In this case, Inequality~\eqref{equation:dual_probabilites} simply states that any item $j$ can only be in $T$ with probability at most $\frac{1}{2}$.

\paragraph{Dual Objective.} To make the dual objective more accessible, we change the order of summation:
\begin{align*}
\text{dual obj.} & = \sum_{T} \mu_T \sum_{i=1}^{n} \beta \Ex[\mathbf{v}]{ \sum_{S \subseteq M} v_i(S \setminus T) \mathds{1}_{S = \OPT_i(\mathbf{v})} } - \frac{1}{\alpha}\Ex[\mathbf{v}]{\mathbf{v}(\OPT(\mathbf{v}))} \\ 
& = \sum_{i=1}^{n} \sum_{S \subseteq M}  \Ex[\mathbf{v}]{ \mathds{1}_{S = \OPT_i(\mathbf{v})} \beta \sum_{T} \mu_T v_i(S \setminus T)  } - \frac{1}{\alpha}\Ex[\mathbf{v}]{  \sum_{i=1}^{n} \sum_{S \subseteq M} \mathds{1}_{S = \OPT_i(\mathbf{v})}  v_i(S )  } \\ 
& = \sum_{i=1}^{n} \sum_{S \subseteq M}  \Ex[\mathbf{v}]{ \mathds{1}_{S = \OPT_i(\mathbf{v})} \left( \beta \sum_{T} \mu_T v_i(S \setminus T) - \frac{1}{\alpha} v_i(S) \right) }
\end{align*}
Instead of arguing that the dual objective is non-negative, we will show the following for suitable choices of $\alpha$ and $\beta$: If a vector $(\mu_T)_T$ is feasible with respect to the dual, then the term 
\begin{align*} 
	\beta \sum_{T} \mu_T v_i(S \setminus T) - \frac{1}{\alpha} v_i(S)
\end{align*}
is non-negative for any $v_i$ and $S$. In particular, we will show the following equivalent claim for the respective choices of  $\alpha$ and $\beta$:
\begin{claim}\label{claim:inequality_dual_objective}
For any set $S$, any XOS/MPH-$k$ function $v_i$ and any dual feasible solution $(\mu_T)_T$, the following holds:
\begin{align} \label{equation:dual_objective_simplified}
\beta \sum_{T} \mu_T v_i(S \setminus T) \geq \frac{1}{\alpha} v_i(S) \hspace*{0.4cm } \text{ or equivalently for $\beta >0$ } \hspace*{0.4cm } \sum_{T} \mu_T v_i(S \setminus T) \geq \frac{1}{\alpha \beta} v_i(S) 
\end{align}
\end{claim}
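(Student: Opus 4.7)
The plan is to lower bound $\sum_T \mu_T v_i(S \setminus T)$ by fixing a representation of $v_i$ that is tight on $S$ and exploiting the per-item bound $\Pr_\mu[j \in T] \leq \beta/(1+\beta)$ from Inequality~\eqref{equation:dual_probabilites}. Concretely, for XOS, choose an additive function $a^\ast$ with coefficients $c^\ast_1,\dots,c^\ast_m$ such that $v_i(S) = a^\ast(S) = \sum_{j \in S} c^\ast_j$. Since additive functions are XOS supporting functions, we get $v_i(S \setminus T) \geq a^\ast(S \setminus T) = \sum_{j \in S \setminus T} c^\ast_j$ for every $T$. Plugging this in, swapping the order of summation, and applying the dual constraint gives
\[ \sum_T \mu_T v_i(S \setminus T) \geq \sum_{j \in S} c^\ast_j \, \Pr[\mu]{j \notin T} \geq \frac{1}{1+\beta} v_i(S). \]
Hence $\beta \sum_T \mu_T v_i(S\setminus T) \geq \frac{\beta}{1+\beta} v_i(S)$, and setting $\beta = 1$ yields the claim with $\alpha = 2$.

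For MPH-$k$, the structure is analogous but one level up: pick a PH-$k$ representative $v^\ast \in \{v_\ell\}$ with hyperedge weights $w^\ast(X) \geq 0$ (supported on $|X| \leq k$) such that $v_i(S) = v^\ast(S) = \sum_{X \subseteq S} w^\ast(X)$. Since $v^\ast$ is also a valid lower bound on $v_i$ on smaller sets, $v_i(S \setminus T) \geq v^\ast(S \setminus T) = \sum_{X \subseteq S} w^\ast(X) \mathds{1}_{X \cap T = \emptyset}$. After swapping the sum over $T$ with the sum over $X$, the task reduces to lower bounding $\Pr_\mu[X \cap T = \emptyset]$ for each hyperedge $X \subseteq S$.

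Here the main (and really only nontrivial) step appears: a union bound over the at-most $k$ items in $X$ together with the per-item dual constraint gives
\[ \Pr[\mu]{X \cap T = \emptyset} \geq 1 - |X| \cdot \frac{\beta}{1+\beta} \geq \frac{1 - (k-1)\beta}{1+\beta}, \]
provided $\beta < 1/(k-1)$ so that this quantity is positive. Combining,
\[ \beta \sum_T \mu_T v_i(S \setminus T) \geq \frac{\beta(1-(k-1)\beta)}{1+\beta} v_i(S), \]
so it suffices to set $1/\alpha = \beta(1-(k-1)\beta)/(1+\beta)$.

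The final step is a one-variable calculus exercise: maximize $g(\beta) = \beta(1-(k-1)\beta)/(1+\beta)$ over $\beta \in (0, 1/(k-1))$. Differentiating yields the quadratic $(k-1)\beta^2 + 2(k-1)\beta - 1 = 0$ with positive root $\beta^\ast = (\sqrt{k(k-1)} - (k-1))/(k-1)$, and plugging back in gives $\alpha = 2k + 2\sqrt{k(k-1)} - 1$, which matches the advertised bound and reduces to $\alpha = 2$ when $k = 1$ (by sending $\beta \to 1$, since the union bound constraint becomes vacuous). The obstacle I expect to watch out for is simply the union-bound step being lossy for MPH-$k$: it is the only place where we ignore structure beyond per-item marginals of $\mu$, so it is what drives the final ratio, but no sharper use of $\mu$ seems available from the dual constraints alone.
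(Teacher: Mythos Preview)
Your proposal is correct and follows essentially the same route as the paper: fix the supporting additive (for XOS) or PH-$k$ (for MPH-$k$) function that is tight on $S$, swap summation, and use the per-item bound from Inequality~\eqref{equation:dual_probabilites} together with a union bound over the at most $k$ elements of each hyperedge to lower bound $\Pr_\mu[X\cap T=\emptyset]$ by $1-\tfrac{k\beta}{1+\beta}=\tfrac{1-(k-1)\beta}{1+\beta}$. The only cosmetic difference is that the paper simply states the two parameter pairs $(\alpha,\beta)=(4k-2,\tfrac{1}{2(k-1)})$ and $(2k+2\sqrt{k(k-1)}-1,\sqrt{k/(k-1)}-1)$ and verifies the identity $1-\tfrac{k\beta}{1+\beta}=\tfrac{1}{\alpha\beta}$, whereas you derive the optimal $\beta$ by maximizing $g(\beta)$; both arrive at the same bound.
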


This will be our \emph{second key ingredient}. \\
We can interpret it as follows: When drawing a set $T$ with respect to distribution $(\mu_T)_T$, the value that remains from some fixed set $S$ after removing $T$ is still at least a $\frac{1}{\alpha \beta}$-fraction of the original value $v_i(S)$.

Having our two key ingredients, we are ready to derive the competitive ratios. 
	\section{Deriving Competitive Ratios Easily}
\label{section:proof_competitive_ratios}

By the construction in Section~\ref{section:framework_lp}, we are only required to argue that the dual objective is non-negative for the choices of $\alpha$ and $\beta$ in the respective settings. As mentioned, we will show a stronger result, namely that Inequality~\eqref{equation:dual_objective_simplified} holds for any subset of items $S \subseteq M$ and any valuation function $v_i$ which is XOS or MPH-$k$. We summarize the respective statements first and give proofs for each afterwards.
	
\begin{lemma}\label{lemma:single_item}
	In the case of a single item, Inequality~\eqref{equation:dual_objective_simplified} in Claim~\ref{claim:inequality_dual_objective} holds for $\alpha =2$ and $\beta=1$.
\end{lemma}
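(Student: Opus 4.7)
The plan is essentially a direct calculation, since the single-item case collapses the dual distribution to a two-atom support. With $M = \{j\}$, the only subsets of $M$ are $\emptyset$ and $\{j\}$, so the distribution $(\mu_T)_T$ is described by the single parameter $\mu_{\{j\}} = \Pr_\mu[j \in T]$, and the first key property \eqref{equation:dual_probabilites} with $\beta = 1$ becomes simply $\Pr_\mu[j \notin T] \geq \tfrac{1}{2}$.

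From here I would case-split on $S \subseteq M$. If $S = \emptyset$, normalization gives $v_i(S \setminus T) = v_i(\emptyset) = 0$ for either choice of $T$, and both sides of \eqref{equation:dual_objective_simplified} vanish. If $S = \{j\}$, then
\[ \sum_T \mu_T\, v_i(S \setminus T) = \mu_\emptyset\, v_i(\{j\}) + \mu_{\{j\}}\, v_i(\emptyset) = \Pr[\mu]{j \notin T} \cdot v_i(\{j\}), \]
and plugging in the dual bound $\Pr_\mu[j \notin T] \geq \tfrac{1}{2}$ yields $\sum_T \mu_T\, v_i(S \setminus T) \geq \tfrac{1}{2} v_i(S) = \tfrac{1}{\alpha\beta} v_i(S)$, as required.

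There is no real obstacle here: with only one item, the feasibility constraint of the dual is precisely the inequality we need, and the identification $\alpha \beta = 2$ matches it exactly. The only thing worth highlighting is that this tightness is not an accident — it mirrors the classical factor-$2$ prophet inequality for a single item, and confirms that the LP-duality framework of Section~\ref{section:framework_lp} does not lose anything in this base case.
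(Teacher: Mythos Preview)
Your proof is correct and follows essentially the same approach as the paper: both observe that with a single item the dual distribution has only the two atoms $\mu_\emptyset$ and $\mu_{\{j\}}$, compute $\sum_T \mu_T\, v_i(S\setminus T) = \mu_\emptyset \cdot v_i(S)$, and then invoke the dual constraint \eqref{equation:dual_probabilites} with $\beta=1$ to get $\mu_\emptyset = \Pr[\mu]{j\notin T} \geq \tfrac12$. The only cosmetic difference is that you case-split explicitly on $S$, whereas the paper encodes both cases at once via the indicator $\mathds{1}_{S=\{\text{item}\}}$.
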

		
\begin{lemma}\label{lemma:XOS}
	For XOS valuation functions, Inequality~\eqref{equation:dual_objective_simplified} in Claim~\ref{claim:inequality_dual_objective} holds for $\alpha =2$ and $\beta=1$.
\end{lemma}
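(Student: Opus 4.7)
The plan is to exploit the XOS structure together with the first key property~\eqref{equation:dual_probabilites}, which with $\beta = 1$ states that $\Pr_\mu[j \notin T] \geq \frac{1}{2}$ for every item $j \in M$. Since we need to show $\sum_T \mu_T v_i(S \setminus T) \geq \frac{1}{2} v_i(S)$, the natural route is to linearize $v_i(S \setminus T)$ by passing to a supporting additive function, then swap the order of summation so that the $\Pr_\mu[j \notin T] \geq \frac{1}{2}$ bound applies item-by-item.

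Concretely, I would first fix an XOS valuation $v_i$ and a set $S \subseteq M$, and invoke the definition of XOS to pick an additive function $a(\cdot) = \sum_{j \in \cdot} c_j$ from the defining collection of $v_i$ that is tight at $S$, i.e., $v_i(S) = a(S) = \sum_{j \in S} c_j$. For every $T \subseteq M$, monotonicity of the maximum in the XOS definition yields the pointwise lower bound
\[
v_i(S \setminus T) \;\geq\; a(S \setminus T) \;=\; \sum_{j \in S \setminus T} c_j \;=\; \sum_{j \in S} c_j \, \mathds{1}_{j \notin T} \enspace.
\]

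Next I would take the $\mu$-expectation of this inequality and exchange the order of summation:
\[
\sum_{T \subseteq M} \mu_T \, v_i(S \setminus T) \;\geq\; \sum_{T \subseteq M} \mu_T \sum_{j \in S} c_j \, \mathds{1}_{j \notin T} \;=\; \sum_{j \in S} c_j \, \Pr[\mu]{j \notin T} \enspace.
\]
Now the first key property, which is exactly the content of the dual feasibility constraints restated in Section~\ref{subsection:understanding_dual}, gives $\Pr_\mu[j \notin T] \geq \frac{1}{1+\beta} = \frac{1}{2}$ for every $j$. Plugging this in and using tightness of $a$ at $S$ yields
\[
\sum_{T \subseteq M} \mu_T \, v_i(S \setminus T) \;\geq\; \frac{1}{2} \sum_{j \in S} c_j \;=\; \frac{1}{2} \, a(S) \;=\; \frac{1}{2} \, v_i(S) \enspace,
\]
which is exactly Inequality~\eqref{equation:dual_objective_simplified} with $\alpha\beta = 2$.

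I do not foresee a real obstacle here: the whole point of the LP/duality reformulation is to reduce the prophet inequality proof to this one-line application of the XOS supporting-additive-function property combined with a per-item probability bound. The only subtlety worth flagging is that the supporting additive function must be chosen to be tight at the \emph{full} set $S$ (not at $S \setminus T$), so that the losses from deletion are controlled linearly in the coefficients $c_j$; choosing a supporter that depends on $T$ would break the swap-of-summation step.
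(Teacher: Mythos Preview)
Your proposal is correct and follows essentially the same route as the paper's own proof: fix the additive supporting function of $v_i$ that is tight at $S$, linearize $v_i(S\setminus T)$ via that function, swap the order of summation, and apply the per-item bound $\Pr_\mu[j\notin T]\geq \tfrac{1}{2}$ from the dual constraints. Your closing remark about choosing the supporter tight at $S$ rather than at $S\setminus T$ is exactly the point that makes the swap-of-summation step work.
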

			
Observe that these two lemmas directly correspond to the best known competitive ratios and are tight. In addition, note that the class of XOS valuation functions contains e.g. submodular functions and is equivalent to the class of MPH-$1$ valuations.
		
For MPH-$k$ valuation functions, we first give a simplified proof of the competitive ratio of \citet{DBLP:conf/focs/DuettingFKL17}. In particular, \citet{ DBLP:conf/focs/DuettingFKL17} introduced a reduction which allows to only argue about deterministic valuation functions instead of randomly drawn ones. Our proof will be even simpler: our argument only requires to consider the sizes of sets which are relevant in MPH-$k$ valuations, the valuations as such do not play a role at all.
			
\begin{lemma}\label{lemma:MPH_k_balanced_prices}
	For MPH-$k$ valuation functions, Inequality~\eqref{equation:dual_objective_simplified} in Claim~\ref{claim:inequality_dual_objective} holds for $\alpha =4k-2$ and $\beta=\frac{1}{2(k-1)}$ when $k \geq 2$.
\end{lemma}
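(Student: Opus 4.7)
\noindent\textbf{Proof plan for Lemma~\ref{lemma:MPH_k_balanced_prices}.}

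The plan is to verify the (equivalent) inequality $\sum_{T} \mu_T \, v_i(S \setminus T) \geq \frac{k-1}{2k-1} v_i(S)$, since with the given choice $\alpha\beta = (4k-2)\cdot\frac{1}{2(k-1)} = \frac{2k-1}{k-1}$. The starting point is the first key property~\eqref{equation:dual_probabilites}: with $\beta = \frac{1}{2(k-1)}$ it gives, for every item $j \in M$,
\[
\Pr_\mu[j \in T] \;\leq\; \frac{\beta}{1+\beta} \;=\; \frac{1}{2k-1}.
\]
This is the only information about $\mu$ that will be used.

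Next I would exploit the MPH-$k$ structure. Since $v_i$ is MPH-$k$, fix a PH-$k$ function $v_{\ell^\ast}$ witnessing the maximum \emph{at the fixed set} $S$, i.e.\ $v_i(S) = v_{\ell^\ast}(S) = \sum_{X \subseteq S,\, |X|\leq k} w_{\ell^\ast}(X)$ with all $w_{\ell^\ast}(X) \geq 0$. For any $T$ we then have the one-sided bound
\[
v_i(S \setminus T) \;\geq\; v_{\ell^\ast}(S \setminus T) \;=\; \sum_{\substack{X \subseteq S\setminus T\\ |X|\leq k}} w_{\ell^\ast}(X) \;=\; \sum_{\substack{X \subseteq S\\ |X|\leq k}} w_{\ell^\ast}(X)\,\mathds{1}_{X \cap T = \emptyset},
\]
where it is crucial that the representation $\ell^\ast$ depends only on $S$, not on $T$; this is the whole reason MPH-$k$ is more tractable than general subadditive functions.

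Now I would take the expectation over $T \sim \mu$, swap the sums, and use a union bound on hyperedges. For every hyperedge $X$ with $|X| \leq k$,
\[
\Pr_\mu[X \cap T = \emptyset] \;\geq\; 1 - \sum_{j \in X} \Pr_\mu[j \in T] \;\geq\; 1 - \frac{k}{2k-1} \;=\; \frac{k-1}{2k-1}.
\]
Combining with non-negativity of the weights yields
\[
\sum_{T} \mu_T\, v_i(S\setminus T) \;\geq\; \sum_{\substack{X \subseteq S\\ |X|\leq k}} w_{\ell^\ast}(X)\,\Pr_\mu[X \cap T = \emptyset] \;\geq\; \frac{k-1}{2k-1}\, v_{\ell^\ast}(S) \;=\; \frac{k-1}{2k-1}\, v_i(S),
\]
which is exactly Claim~\ref{claim:inequality_dual_objective} for the stated $\alpha,\beta$.

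The only real content is picking the correct PH-$k$ witness: it must be chosen for $S$, not for $S \setminus T$, so that the union bound can be applied uniformly across all draws of $T$; and it must use non-negativity of the hyperedge weights to discard the $\mathds{1}_{X \cap T = \emptyset}$ terms we cannot control. Everything else is a direct union bound driven by the hyperedge size $|X| \leq k$ together with the per-item bound coming from dual feasibility. I don't expect any serious obstacle; the main care is just keeping the max/representation argument lined up with the set $S$ that appears in the inequality to be proved.
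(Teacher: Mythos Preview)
Your proposal is correct and follows essentially the same route as the paper: fix the PH-$k$ witness at $S$, expand $v_i(S\setminus T)$ via its hyperedge decomposition, swap the order of summation, and apply a union bound $\Pr_\mu[X\cap T\neq\emptyset]\le |X|\cdot\frac{\beta}{1+\beta}\le \frac{k}{2k-1}$ using dual feasibility. The paper's only cosmetic difference is that it keeps $\beta$ symbolic throughout and verifies $1-\frac{k\beta}{1+\beta}=\frac{1}{\alpha\beta}$ at the end (which also yields Lemma~\ref{lemma:MPH_k_improved} for free), whereas you substitute $\beta=\frac{1}{2(k-1)}$ up front.
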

				
Finally, for all $k \geq 2$, we show that we can get a tiny improvement in the competitive ratio compared to previously known results for MPH-$k$ valuations as $2k + 2 \sqrt{k(k-1)} -1 < 4k-2$.
				
\begin{lemma}\label{lemma:MPH_k_improved}
	For MPH-$k$ valuation functions, $\alpha =2k + 2 \sqrt{k(k-1)} -1$ and $\beta=\sqrt{\frac{k}{k-1}}-1$, Inequality~\eqref{equation:dual_objective_simplified} in Claim~\ref{claim:inequality_dual_objective} holds when $k \geq 2$.
\end{lemma}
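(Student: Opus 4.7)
The plan is to follow the same route as in Lemma~\ref{lemma:MPH_k_balanced_prices} but to keep $\beta$ as a free parameter and tune it optimally at the end. Fix $S$, $v_i$, and a dual feasible $(\mu_T)_T$. Let $w$ denote the PH-$k$ function in the MPH-$k$ representation of $v_i$ that maximizes $v_i(S)$, so $v_i(S)=\sum_{X\subseteq S}w(X)$ with $w(X)\geq 0$ and $w(X)=0$ whenever $|X|>k$. By monotonicity and non-negativity of $w$, for every $T$ we have $v_i(S\setminus T)\geq \sum_{X\subseteq S, X\cap T=\emptyset} w(X)$. Taking expectation over $T\sim\mu$ and swapping sums gives
\begin{align*}
\sum_T \mu_T v_i(S\setminus T) \;\geq\; \sum_{X\subseteq S,\, |X|\leq k} w(X)\cdot \Pr[\mu]{X\cap T=\emptyset}.
\end{align*}

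Next I would lower bound $\Pr[\mu]{X\cap T=\emptyset}$ using the \emph{first key property}~\eqref{equation:dual_probabilites}. By a union bound,
\begin{align*}
\Pr[\mu]{X\cap T=\emptyset} \;\geq\; 1-\sum_{j\in X}\Pr[\mu]{j\in T} \;\geq\; 1- |X|\cdot\frac{\beta}{1+\beta} \;\geq\; \frac{1-(k-1)\beta}{1+\beta},
\end{align*}
since $|X|\leq k$. As this bound is independent of $X$, it factors out and yields $\sum_T \mu_T v_i(S\setminus T)\geq \frac{1-(k-1)\beta}{1+\beta}\cdot v_i(S)$. Thus it suffices to choose $\alpha,\beta$ so that $\frac{1}{\alpha\beta}\leq \frac{1-(k-1)\beta}{1+\beta}$, i.e.
\begin{align*}
\alpha \;\geq\; \frac{1+\beta}{\beta\bigl(1-(k-1)\beta\bigr)}.
\end{align*}

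The remaining step is a one-variable optimization of the right-hand side over $\beta\in(0,1/(k-1))$. Differentiating, the minimizer satisfies $(k-1)\beta^{2}+2(k-1)\beta-1=0$, which gives $\beta^{\star}=\sqrt{k/(k-1)}-1$, matching the value in the statement. Writing $u=\sqrt{k}$ and $v=\sqrt{k-1}$ turns the evaluation into a clean calculation: $1+\beta^{\star}=u/v$ and $1-(k-1)\beta^{\star}=u^{2}-uv=u(u-v)$, so $\beta^{\star}(1-(k-1)\beta^{\star})=u(u-v)^{2}/v$ and consequently
\begin{align*}
\frac{1+\beta^{\star}}{\beta^{\star}\bigl(1-(k-1)\beta^{\star}\bigr)} \;=\; \frac{1}{(u-v)^{2}} \;=\; \bigl(\sqrt{k}+\sqrt{k-1}\bigr)^{2} \;=\; 2k-1+2\sqrt{k(k-1)},
\end{align*}
which is exactly the claimed $\alpha$. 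Combining this with the inequality above establishes~\eqref{equation:dual_objective_simplified} for the stated $\alpha$ and $\beta$.

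None of the individual steps is technically difficult; the only subtlety is recognizing that one should keep $\beta$ symbolic and optimize, rather than plugging in $\beta=\tfrac{1}{2(k-1)}$ as in the proof of Lemma~\ref{lemma:MPH_k_balanced_prices}. The main obstacle, if any, is verifying monotonically that $\beta^{\star}\in(0,1/(k-1))$ for every $k\geq 2$ so that the union bound is non-trivial, but this follows immediately from $\sqrt{k/(k-1)}<k/(k-1)$ when $k\geq 2$.
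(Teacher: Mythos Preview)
Your argument is correct and follows essentially the same route as the paper: fix the supporting PH-$k$ function, apply the union bound together with~\eqref{equation:dual_probabilites} to get $\Pr[\mu]{X\cap T=\emptyset}\geq 1-\frac{k\beta}{1+\beta}=\frac{1-(k-1)\beta}{1+\beta}$, and conclude. The only difference is cosmetic: the paper simply verifies that $1-\frac{k\beta}{1+\beta}=\frac{1}{\alpha\beta}$ for the stated $\alpha,\beta$, whereas you additionally derive these values by optimizing $\frac{1+\beta}{\beta(1-(k-1)\beta)}$ over $\beta$, which is a nice addition but not a different method.
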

				
\subsection{Warm-Up: A Single Item}	
\label{subsection:single_item_proof}			
					
Observe that in the case of a single item, the vector $(\mu_T)_T$ only has two entries $\mu_\emptyset$ and $\mu_{\{ \text{item} \}}$. 

\begin{proof}[Proof of Lemma~\ref{lemma:single_item}]

We overload notation and denote by $v_i$ the value of agent $i$ for the item in order to rewrite the left-hand side of Inequality~\eqref{equation:dual_objective_simplified} as follows:
\begin{align*}
	\sum_{T} \mu_T v_i(S \setminus T) & = \sum_{T} \mu_T \left( v_i  \mathds{1}_{S = \{ \text{item} \} } \mathds{1}_{T = \emptyset} \right) = \left( \sum_{T} \mu_T   \mathds{1}_{T = \emptyset} \right) v_i  \mathds{1}_{S = \{ \text{item} \} } \\ & =  \mu_\emptyset \cdot v_i(S)  =  \Pr[\mu]{T = \emptyset} \cdot v_i(S)
\end{align*}	
Hence, what remains is to show that $\Pr[\mu]{T = \emptyset} \geq \frac{1}{2}$ for any dual feasible solution. Indeed, observe that by Inequality~\eqref{equation:dual_probabilites} for $\beta =1$, we have \[ \Pr[\mu]{T = \emptyset} = \Pr[\mu]{\text{item} \notin T} \geq \frac{1}{2} \enspace. \qedhere \] 

\end{proof}	
					
Note that we did only argue about the value of the dual variables $\mu_T$ and did not at all need to take a specific value $v_i$ into account. In particular, it is only important that the probability distribution $\left( \mu_T \right)_T$ does not put too much mass on $T = \{\text{item}\}$ or in other words, $\mu_\emptyset$ is sufficiently large. The remarkable thing is that this argument nicely extends to XOS and MPH-$k$ valuation functions without becoming much more involved. In the more general settings, we will consider (subsets of) items separately and argue in an equivalent way about $\left( \mu_T \right)_T$.

\subsection{XOS valuation functions}	
\label{subsection:XOS_proof}

We proceed in a similar way as in the proof of Lemma~\ref{lemma:single_item}. Still, we need to take the combinatorial structure of the valuation functions into account. 

\begin{proof}[Proof of Lemma~\ref{lemma:XOS}]
	First, observe that for any XOS valuation function $v_i$, when fixing one of the additive supporting functions, we obtain a lower bound on the function's value. In particular, we can bound the value of $ v_i(S \setminus T) $ from below by considering the additive function with which buyer $i$ would evaluate the set $S$. We denote this function by $w_i^{S}$. Using this property, we can lower bound the left-hand side of Inequality~\eqref{equation:dual_objective_simplified} as follows:
	\begin{align*}
		\sum_{T} \mu_T v_i(S \setminus T) & \geq \sum_{T} \mu_T w_i^S(S \setminus T)  = \sum_{T} \mu_T \sum_{j \in S} w_i^S(\{j\}) \mathds{1}_{j \notin T} \\
		& = \sum_{j \in S} w_i^S(\{j\}) \sum_{T} \mu_T \mathds{1}_{j \notin T}  = \sum_{j \in S} w_i^S(\{j\})   \Pr[\mu]{j \notin T} \\
		& \geq  \frac{1}{2} \sum_{j \in S} w_i^S(\{j\})  
		 = \frac{1}{2} v_i(S)
	\end{align*}
	The first inequality holds due to the XOS property of $v_i$, the second step exploits that the function $w_i^S$ is additive. In the last inequality, we made use of Inequality~\eqref{equation:dual_probabilites} for $\beta =1$ to see that $\Pr[\mu]{j \notin T} \geq \frac{1}{2}$. \qedhere 

\end{proof}

\subsection{MPH-$k$ valuation functions}	
\label{subsection:mph_k_proof}

Also the proof for MPH-$k$ valuations follows a similar template. In contrast to XOS valuation functions, we have to take into account that items can complement each other. Further, we give a combined proof for Lemmas~\ref{lemma:MPH_k_balanced_prices} and \ref{lemma:MPH_k_improved} with general $\alpha \geq 1$ and $\beta \in (0,1]$ and use two observations afterwards to show the desired competitive ratios.

\begin{proof}[Proof of Lemmas~\ref{lemma:MPH_k_balanced_prices} and \ref{lemma:MPH_k_improved}]
	As in the case of XOS valuation functions, also for MPH-$k$ valuations we can fix one of the supporting PH-$k$ functions to obtain a lower bound on the value. In particular, we can bound the value of $v_i(S \setminus T)$ from below by only considering the PH-$k$ function with which buyer $i$ would evaluate the set $S$, denoted by $v_i^{S} (\cdot)$ with corresponding weights on hyperedges denoted by $w_i^{S} (\cdot)$. This implies the following lower bound on the left-hand side of Inequality~\eqref{equation:dual_objective_simplified}: 
	
	\begin{align*}
	 \sum_{T} \mu_T v_i(S \setminus T)  & \geq  \sum_{T} \mu_T v_i^S(S \setminus T)  \\ 
		& =  \sum_{T} \mu_T  \sum_{X \subseteq S} w_i^S(X) \mathds{1}_{X \cap T = \emptyset} \\
		& =  \sum_{X \subseteq S} w_i^S(X) \left(  \sum_{T} \mu_T  \mathds{1}_{X \cap T = \emptyset} \right) \\ 
		& = \sum_{X \subseteq S} w_i^S(X)   \Pr[\mu]{X \cap T = \emptyset}  
	\end{align*}
	Next, we argue that for the respective choices of $\alpha \geq 1$ and $\beta \in (0,1]$, the term $ \Pr[\mu]{X \cap T = \emptyset}$ is at least as large as $\frac{1}{\alpha \beta}$ for any $X$ with $|X| \leq k$ and any feasible dual solution. \\
	
	To this end, first note that by the union bound and Inequality~\eqref{equation:dual_probabilites} \[ \Pr[\mu]{\exists \ j \in X : j \in T } \leq \sum_{j \in X} \Pr[\mu]{j \in T} \leq \sum_{j \in X} \frac{\beta}{1+\beta} \leq \frac{k \beta}{1 + \beta} \enspace, \] where in the last inequality, we used that $|X| \leq k$ via the MPH-$k$ property. Hence, we can lower bound $\Pr[\mu]{X \cap T = \emptyset}$ via
	\begin{align*}
		 \Pr[\mu]{X \cap T = \emptyset}  = \Pr[\mu]{\forall \ j \in X : j \notin T }  = 1 - \Pr[\mu]{\exists \ j \in X : j \in T } \geq 1 - \frac{k \beta}{1 + \beta}  \enspace.
	\end{align*}
	As a consequence, we obtain that 
	\begin{align*}
		\sum_{T} \mu_T v_i(S \setminus T)  & \geq  \sum_{X \subseteq S} w_i^S(X)  \left(1 - \frac{k \beta}{1 + \beta} \right) = \left(1 - \frac{k \beta}{1 + \beta} \right) v_i(S) \enspace.
	\end{align*}
	We can conclude the proof of Lemma~\ref{lemma:MPH_k_balanced_prices} by observing that for $\alpha =4k-2$ and $\beta=\frac{1}{2(k-1)}$, we have $  1 - \frac{k \beta}{1 + \beta} = \frac{1}{\alpha \beta}$. \\ The proof of Lemma~\ref{lemma:MPH_k_improved} follows by checking that for $\alpha =2k + 2 \sqrt{k(k-1)} -1$ and $\beta=\sqrt{\frac{k}{k-1}}-1$, also $1 - \frac{k \beta}{1 + \beta} = \frac{1}{\alpha \beta} $ which again leads to the desired result.
\end{proof}	

One might wonder what is really going on. As a matter of fact, we made use of the property that the probability distribution $(\mu_T)_T$ cannot put too much mass on any item. In other words, the dual constraints ensure the following: When drawing a set $T$ with respect to $(\mu_T)_T$, any item $j \in M$ is ensured to be not in $T$ with a reasonably high probability. This implies for XOS and MPH-$k$ functions that the values of $v_i(S \setminus T)$ and $v_i(S)$ are sufficiently close to each other. 

	\section{Future Research}
\label{section:conclusion}

In contrast to previous work, our proof for the existence of thresholds can avoid arguments on specific buyers' valuations in comparison to the offline optimum. It would be interesting to see the power of LP duality beyond combinatorial auctions. As such, one could consider matroids or matroid intersections \citep{10.1145/2213977.2213991}. Here, it is known that we are required to use dynamic prices in order to achieve any constant competitive ratio. Is it possible to extend the simplified proofs in dual space also to this case?

Concerning computability, existing approaches mainly state prices directly \cite{DBLP:conf/soda/FeldmanGL15, DBLP:conf/focs/DuettingFKL17}. When applying the proof steps from Section~\ref{section:proof_competitive_ratios} constructively to our LP, one can solve for item prices. In particular, these prices are exactly the ones obtained by \cite{DBLP:conf/soda/FeldmanGL15, DBLP:conf/focs/DuettingFKL17}. Still, we leave it as an open problem if our general LP can be solved efficiently given access to e.g. demand oracles.
	\bibliography{references,dblp}

\begin{thebibliography}{13}
\providecommand{\natexlab}[1]{#1}
\providecommand{\url}[1]{\texttt{#1}}
\expandafter\ifx\csname urlstyle\endcsname\relax
  \providecommand{\doi}[1]{doi: #1}\else
  \providecommand{\doi}{doi: \begingroup \urlstyle{rm}\Url}\fi

\bibitem[Abraham et~al.(2012)Abraham, Babaioff, Dughmi, and
  Roughgarden]{10.1145/2229012.2229016}
I.~Abraham, M.~Babaioff, S.~Dughmi, and T.~Roughgarden.
\newblock Combinatorial auctions with restricted complements.
\newblock In \emph{Proceedings of the 13th ACM Conference on Electronic
  Commerce}, EC '12, page 3–16, New York, NY, USA, 2012. Association for
  Computing Machinery.
\newblock ISBN 9781450314152.
\newblock \doi{10.1145/2229012.2229016}.
\newblock URL \url{https://doi.org/10.1145/2229012.2229016}.

\bibitem[Chawla et~al.(2010)Chawla, Hartline, Malec, and
  Sivan]{DBLP:conf/stoc/ChawlaHMS10}
S.~Chawla, J.~D. Hartline, D.~L. Malec, and B.~Sivan.
\newblock Multi-parameter mechanism design and sequential posted pricing.
\newblock In L.~J. Schulman, editor, \emph{Proceedings of the 42nd {ACM}
  Symposium on Theory of Computing, {STOC} 2010, Cambridge, Massachusetts, USA,
  5-8 June 2010}, pages 311--320. {ACM}, 2010.
\newblock \doi{10.1145/1806689.1806733}.
\newblock URL \url{https://doi.org/10.1145/1806689.1806733}.

\bibitem[Correa et~al.(2022)Correa, Cristi, Fielbaum, Pollner, and
  Weinberg]{DBLP:conf/ipco/CorreaCFPW22}
J.~R. Correa, A.~Cristi, A.~Fielbaum, T.~Pollner, and S.~M. Weinberg.
\newblock Optimal item pricing in online combinatorial auctions.
\newblock In K.~Aardal and L.~Sanit{\`{a}}, editors, \emph{Integer Programming
  and Combinatorial Optimization - 23rd International Conference, {IPCO} 2022,
  Eindhoven, The Netherlands, June 27-29, 2022, Proceedings}, volume 13265 of
  \emph{Lecture Notes in Computer Science}, pages 126--139. Springer, 2022.
\newblock \doi{10.1007/978-3-031-06901-7\_10}.
\newblock URL \url{https://doi.org/10.1007/978-3-031-06901-7\_10}.

\bibitem[D{\"{u}}tting et~al.(2017)D{\"{u}}tting, Feldman, Kesselheim, and
  Lucier]{DBLP:conf/focs/DuettingFKL17}
P.~D{\"{u}}tting, M.~Feldman, T.~Kesselheim, and B.~Lucier.
\newblock Prophet inequalities made easy: Stochastic optimization by pricing
  non-stochastic inputs.
\newblock In C.~Umans, editor, \emph{58th {IEEE} Annual Symposium on
  Foundations of Computer Science, {FOCS} 2017, Berkeley, CA, USA, October
  15-17, 2017}, pages 540--551. {IEEE} Computer Society, 2017.
\newblock \doi{10.1109/FOCS.2017.56}.
\newblock URL \url{https://doi.org/10.1109/FOCS.2017.56}.

\bibitem[D{\"{u}}tting et~al.(2020)D{\"{u}}tting, Kesselheim, and
  Lucier]{DBLP:conf/focs/DuttingKL20}
P.~D{\"{u}}tting, T.~Kesselheim, and B.~Lucier.
\newblock An o(log log m) prophet inequality for subadditive combinatorial
  auctions.
\newblock In S.~Irani, editor, \emph{61st {IEEE} Annual Symposium on
  Foundations of Computer Science, {FOCS} 2020, Durham, NC, USA, November
  16-19, 2020}, pages 306--317. {IEEE}, 2020.
\newblock \doi{10.1109/FOCS46700.2020.00037}.
\newblock URL \url{https://doi.org/10.1109/FOCS46700.2020.00037}.

\bibitem[Feige et~al.(2015)Feige, Feldman, Immorlica, Izsak, Lucier, and
  Syrgkanis]{DBLP:conf/aaai/FeigeFIILS15}
U.~Feige, M.~Feldman, N.~Immorlica, R.~Izsak, B.~Lucier, and V.~Syrgkanis.
\newblock A unifying hierarchy of valuations with complements and substitutes.
\newblock In B.~Bonet and S.~Koenig, editors, \emph{Proceedings of the
  Twenty-Ninth {AAAI} Conference on Artificial Intelligence, January 25-30,
  2015, Austin, Texas, {USA}}, pages 872--878. {AAAI} Press, 2015.
\newblock URL
  \url{http://www.aaai.org/ocs/index.php/AAAI/AAAI15/paper/view/9760}.

\bibitem[Feldman et~al.(2015)Feldman, Gravin, and
  Lucier]{DBLP:conf/soda/FeldmanGL15}
M.~Feldman, N.~Gravin, and B.~Lucier.
\newblock Combinatorial auctions via posted prices.
\newblock In P.~Indyk, editor, \emph{Proceedings of the Twenty-Sixth Annual
  {ACM-SIAM} Symposium on Discrete Algorithms, {SODA} 2015, San Diego, CA, USA,
  January 4-6, 2015}, pages 123--135. {SIAM}, 2015.
\newblock \doi{10.1137/1.9781611973730.10}.
\newblock URL \url{https://doi.org/10.1137/1.9781611973730.10}.

\bibitem[Gravin and Wang(2019)]{DBLP:conf/ec/GravinW19}
N.~Gravin and H.~Wang.
\newblock Prophet inequality for bipartite matching: Merits of being simple and
  non adaptive.
\newblock In A.~Karlin, N.~Immorlica, and R.~Johari, editors, \emph{Proceedings
  of the 2019 {ACM} Conference on Economics and Computation, {EC} 2019,
  Phoenix, AZ, USA, June 24-28, 2019}, pages 93--109. {ACM}, 2019.
\newblock \doi{10.1145/3328526.3329604}.
\newblock URL \url{https://doi.org/10.1145/3328526.3329604}.

\bibitem[Hajiaghayi et~al.(2007)Hajiaghayi, Kleinberg, and
  Sandholm]{DBLP:conf/aaai/HajiaghayiKS07}
M.~T. Hajiaghayi, R.~D. Kleinberg, and T.~Sandholm.
\newblock Automated online mechanism design and prophet inequalities.
\newblock In \emph{Proceedings of the Twenty-Second {AAAI} Conference on
  Artificial Intelligence, July 22-26, 2007, Vancouver, British Columbia,
  Canada}, pages 58--65. {AAAI} Press, 2007.
\newblock URL \url{http://www.aaai.org/Library/AAAI/2007/aaai07-009.php}.

\bibitem[Kleinberg and Weinberg(2012)]{10.1145/2213977.2213991}
R.~Kleinberg and S.~M. Weinberg.
\newblock Matroid prophet inequalities.
\newblock In \emph{Proceedings of the Forty-Fourth Annual ACM Symposium on
  Theory of Computing}, STOC ’12, pages 123–--136, New York, NY, USA, 2012.
  Association for Computing Machinery.
\newblock ISBN 9781450312455.
\newblock \doi{10.1145/2213977.2213991}.
\newblock URL \url{https://doi.org/10.1145/2213977.2213991}.

\bibitem[Krengel and Sucheston(1977)]{krengel1977semiamarts}
U.~Krengel and L.~Sucheston.
\newblock Semiamarts and finite values.
\newblock \emph{Bull. Amer. Math. Soc}, 83\penalty0 (4), 1977.

\bibitem[Krengel and Sucheston(1978)]{krengel1978semiamarts}
U.~Krengel and L.~Sucheston.
\newblock On semiamarts, amarts, and processes with finite value.
\newblock \emph{Probability on Banach spaces}, 4:\penalty0 197--266, 1978.

\bibitem[Samuel-Cahn(1984)]{Samuel84}
E.~Samuel-Cahn.
\newblock Comparison of threshold stop rules and maximum for independent
  nonnegative random variables.
\newblock \emph{Annals of Probability}, 12:\penalty0 1213--1216, 1984.

\end{thebibliography}
	\newpage
	\appendix
	\section{Proof of Lemma~\ref{lemma:framework_lp}}
\label{section:appendix}

\begin{proof}[Proof of Lemma~\ref{lemma:framework_lp}]
	We first split $\Ex[\mathbf{v}]{\mathbf{v}\left( \ALG(\mathbf{v}) \right)}$ into revenue and utility and bound each separately.  \\
		
	\emph{Revenue.} Let $T(\mathbf{v})$ denote the state of the set of allocated items $T$ when running the algorithm on valuation profile $\mathbf{v}$. The expected revenue of the algorithm is 
	\begin{align*} \label{equation:revenue} 
	\Ex[\mathbf{v}]{\REV(\mathbf{v},\mathbf{p})} = \Ex[\mathbf{v}]{\sum_{j \in T(\mathbf{v})} p_j  }  \enspace.
	\end{align*}
	
	\emph{Utility.} In order to lower bound the utility, consider an arbitrary buyer $i$. First, we consider an independent sample $\mathbf{v}'$ from the distribution. Note that buyer $i$ could either buy nothing and hence obtains a utility which is non-negative. Another option is to buy the bundle $\OPT_i((v_i, \mathbf{v}_{-i}')) \setminus T( ( v_i',\mathbf{v}_{-i} ) )$. Hence, the utility of buyer $i$ can be lower bounded by
	\begin{align*}
	\Ex[\mathbf{v}]{u_i(\mathbf{v}, \mathbf{p})} & \geq \Ex[\mathbf{v}, \mathbf{v}']{ \sum_{S \subseteq M} \left( v_i(S \setminus T( (v_i',\mathbf{v}_{-i}) )) - \sum_{j \in S \setminus T( (v_i',\mathbf{v}_{-i}) )} p_j \right)^+ \mathds{1}_{ S = \OPT_i( (v_i, \mathbf{v}_{-i}') )  } }  \\ & \geq \Ex[\mathbf{v}, \mathbf{v}']{ \sum_{S \subseteq M} \beta \left( v_i(S \setminus T( (v_i',\mathbf{v}_{-i}) )) - \sum_{j \in S \setminus T( (v_i',\mathbf{v}_{-i}) )} p_j \right)^+ \mathds{1}_{ S = \OPT_i( (v_i, \mathbf{v}_{-i}') )  } } \\ & \geq \Ex[\mathbf{v}, \mathbf{v}']{ \sum_{S \subseteq M} \beta \left( v_i(S \setminus T( (v_i',\mathbf{v}_{-i}) )) - \sum_{j \in S \setminus T( (v_i',\mathbf{v}_{-i}) )} p_j \right) \mathds{1}_{ S = \OPT_i( (v_i, \mathbf{v}_{-i}') )  }} \\ & = \Ex[\mathbf{v}, \mathbf{v}']{ \sum_{S \subseteq M} \beta \left( v_i(S \setminus T( \mathbf{v}') ) - \sum_{j \in S \setminus T( \mathbf{v}' )} p_j \right) \mathds{1}_{ S = \OPT_i(  \mathbf{v} )  } }  \enspace.
	\end{align*}
	
	Observe that in the second inequality we multiply a non-negative term by $\beta \in [0,1]$ and drop the $(\cdot)^+$ in the third inequality\footnote{\citet{DBLP:conf/ipco/CorreaCFPW22} use an equivalent line of arguments, but keep the $(\cdot)^+$ term in the utility instead of multiplying by $\beta$. Still, as a matter of fact, this will make the problem non-linear and hence, our arguments do not apply. Also, their arguments do not transfer to MPH-$k$ functions, but require to bound the bundle size of requested items.}. The last equality uses independence and the fact that $\mathbf{v}$ and $\mathbf{v}'$ are identically distributed. \\
	
	\emph{Combination.} As a consequence, summing over the lower bound of the utility for all $i$ and adding the revenue, we get 
	\begin{align*}
	\Ex[\mathbf{v}]{\mathbf{v}\left( \ALG(\mathbf{v}) \right)} & = \Ex[\mathbf{v}]{\REV(\mathbf{v},\mathbf{p})} + \sum_{i=1}^{n} \Ex[\mathbf{v}]{u_i(\mathbf{v}, \mathbf{p})} \\ & \geq \Ex[\mathbf{v}]{\sum_{j \in T(\mathbf{v})} p_j  }  + \sum_{i=1}^{n}  \Ex[\mathbf{v}, \mathbf{v}']{ \sum_{S \subseteq M} \beta \left( v_i(S \setminus T( \mathbf{v}') ) - \sum_{j \in S \setminus T( \mathbf{v}' )} p_j \right) \mathds{1}_{ S = \OPT_i(  \mathbf{v} )  } } \\ & \geq \min_{T \subseteq M} \left( \sum_{j \in T} p_j + \sum_{i=1}^n \Ex[\mathbf{v}]{\sum_{S \subseteq M} \beta \left( v_i(S\setminus T) - \sum_{j \in S \setminus T} p_j \right) \mathds{1}_{S = \OPT_i(\mathbf{v})} } \right) \enspace,
	\end{align*}
	where in the last inequality we lower bound the expectation by the worst possible choice for the set of allocated items $T$. 
\end{proof}
\end{document}